\newcommand{\ALG}{\textsc{ACE}}
\newcommand{\OFF}{{\bf OFF}}
    \newtheorem{theorem}{Theorem}[section]
    \newtheorem{lemma}[theorem]{Lemma}
   \def\qed{\hspace*{\fill}$\Box$}
\begin{document}

\date{}

%\mainmatter  % start of an individual contribution

%\title{Optimal In-Network Function Placement\\ {\large The Function Chain Placement and Routing Problem}%

\title{Optimal In-Network Function Placement\thanks{Supported by the FP7 EU project UNIFY, the German-Israeli GIF project I-1245-407.6/2014, and ...}
\\ {\small An $O(\log{n})$-Approximation for Generalized Capacitated Set Cover}%
}

\title{Placing and Routing via Function Chains\\ {\small An $O(\log{n})$-Approximation of Generalized Capacitated Set Cover}%
\thanks{Supported by the FP7 EU project UNIFY, the German-Israeli GIF project I-1245-407.6/2014, and ...}
}

\title{Minimum Network Function Placement\\ {\small $O(\log{n})$-Approximation for Generalized Capacitated Set Cover}%
\thanks{Supported by the FP7 EU project UNIFY and the German-Israeli GIF project I-1245-407.6/2014.}
}

\title{The Online and Approximate\\Service Chain Embedding Problem%
\thanks{Supported by the FP7 EU project UNIFY and the German-Israeli GIF project I-1245-407.6/2014.}
}

\title{Online Admission Control and\\Embedding of Service Chains%
\thanks{Supported by the FP7 EU project UNIFY and the German-Israeli GIF project I-1245-407.6/2014.}
}

\author{Tam\'as Lukovszki$^1$ and 
 Stefan Schmid$^{2}$\\ 
   {\small
$^1$Faculty of Informatics,
  E\"otv\"os Lor\'and University, Budapest, Hungary}\\
{\small $^2$TU Berlin \& Telekom Innovation Laboratories, Berlin, Germany}
}

\newcommand{\capa}{\kappa}

\maketitle

\sloppy

\begin{abstract}
The virtualization and softwarization
of modern computer networks enables
the definition and fast deployment of novel
network services called \emph{service chains}:
sequences of
virtualized network functions (e.g., firewalls, caches, traffic optimizers)
through which traffic is routed between source and destination.
This paper attends to the problem of
admitting and embedding a maximum number of service chains, i.e.,
a maximum number of source-destination pairs which are routed
via a sequence of ${\ell}$ to-be-allocated, capacitated network functions.
We consider an Online variant of this
maximum Service Chain Embedding Problem, short \emph{OSCEP},
where requests arrive over time, in a worst-case manner.
Our main contribution is a deterministic $O(\log \ell)$-competitive
online algorithm, under the assumption that capacities are at least logarithmic in $\ell$.
We show that this is asymptotically optimal within the class of deterministic
and randomized online algorithms.
We also explore lower bounds for offline approximation algorithms, and prove that
the offline problem is APX-hard for unit %small
capacities and small~$\ell\geq 3$,
and even Poly-APX-hard in general, when there is no bound on~$\ell$.
These approximation lower bounds may be of independent interest,
as they also extend to other problems such as Virtual Circuit Routing.
Finally, we present an exact algorithm based on 0-1 programming, implying that the general
offline SCEP is in NP and by the above hardness results it is NP-complete for constant $\ell$.
\end{abstract}

\section{Introduction}\label{sec:intro}

Today's computer networks provide a rich set of in-network functions,
including access control, firewall, intrusion detection, network address translation,
traffic shaping and optimization,
caching, among many more.
While such functionality is traditionally implemented in hardware middleboxes,
computer networks become more and more
virtualized~\cite{opennf,manifesto}:
\emph{Network Function Virtualization (NFV)} enables a flexible
instantiation of network functions on
network nodes, e.g., running in a virtual machine
on a commodity x86 server.

Modern computer networks also offer new flexibilities in terms of how
traffic can be routed through such network functions.
In particular, using \emph{Software-Defined Networking (SDN)}~\cite{openflow}
technology,
traffic can be steered along arbitrary routes, i.e.,
along routes which depend on the application~\cite{sdx}, and which
are not necessarily shortest paths or destination-based,
or
not even loop-free~\cite{flowtags}.

These trends enable the realization of
interesting new in-network communication services
called \emph{service chains}~\cite{ETSI,ewsdn14,merlin}:
sequences of network functions which are allocated and stitched
together in a flexible manner.
For example, a service chain $c_i$ could define that
the traffic originating at source $s_i$ is first steered through
an intrusion detection system
for security (1$^{\mathit{st}}$ network function),
next through a traffic optimizer (2$^{\mathit{nd}}$ network function),
and only then is routed towards the destination $t_i$.
%In fact, Openflow, the standard SDN protocol today,
%not only introduces a more flexible routing, but
%itself allows to implement basic middlebox functionality, on the switches~\cite{road}:
%an Openflow switch can match, and perform actions upon, not only
%layer-2, but also layer-3 and layer-4 header fields.
%
Such advanced network services open an interesting new
market for Internet Service Providers, which can become
``miniature cloud providers''~\cite{eurosys15} specialized for
in-network processing.

\subsection{Paper Scope}

In this paper, we study the problem of how to optimally admit
and embed
service chain requests.
Given a redundant distribution of network functions and a sequence $\sigma=(\sigma_1,\sigma_2,\ldots,\sigma_k)$,
where each $\sigma_i=(s_i,t_i)$ for $i\in[1,k]$
defines a source-destination pair $(s_i,t_i)$ which needs
to be routed via a sequence of network function instances, we ask:
Which requests $\sigma_i$
to admit
and where to allocate their service chains $c_i$?
The service chain embedding should respect capacity constraints as
well as constraints on the length (or stretch) of the route from $s_i$
to $t_i$ via its service chain $c_i$.

Our objective is to maximize the number of admitted
requests. We are particularly interested in the
\emph{Online Service Chain Embedding Problem (OSCEP)},
where $\sigma$ is only revealed over time.
We assume that a request cannot be delayed
 and once admitted, cannot be preempted again.
 Sometimes, we are also interested in the general (offline) problem,
 henceforth denoted by SCEP.
% we support requests of unknown, bounded duration:
%service chain requests may expire at arbitrary times.

%The problem can be seen as an interesting and novel variant of
%Capacitated Set Cover where (1) the cover can consist of \emph{sequences} of sets and
%where (2) constraints on the distance \emph{between communicating node pairs}
%via nodes in these sets have to be satisfied.
%The problem also exhibits interesting connections
%to Online Call Control, where routes from $s$ to $t$
%also need to traverse certain (to be allocated) network functions.

\subsection{Our Contribution}

We formulate the online and offline problems OSCEP
and SCEP, and make the following contributions:
\begin{enumerate}
\item We present a deterministic online algorithm $\ALG$\footnote{\textbf{A}dmission control and \textbf{C}hain \textbf{E}mbedding.} which, given
that node capacities are at least logarithmic, achieves
a competitive ratio $O(\log{\ell})$ for OSCEP. This result is practically interesting,
as the number of to be traversed network functions $\ell$ is likely to be small in practice.
To the best of our knowledge,
so far, only heuristic and offline approaches
to solve the service chain embedding problem have been considered~\cite{DBLP:journals/corr/BariCAB15,karl-chains,merlin}.

\item We establish a connection to virtual circuit routing and
prove that $\ALG$ is asymptotically optimal in the class of both
deterministic and randomized online algorithms.
Moreover, we initiate the study of lower bounds
for the offline version of our problem, and show that no good
approximation algorithms exist, unless
$P=NP$: for unit capacities and already small $\ell$,
the offline problem SCEP is APX-hard. For arbitrary $\ell$,
the problem can even become Poly-APX-hard.
These results
also apply to the offline version of classic online call control problems,
which to the best of our knowledge have not been studied before.

\item
We present a 0-1 program for SCEP, which
also shows
that
SCEP is in NP for constant $\ell$ and, taking into account our hardness result,
that SCEP is NP-complete for constant $\ell$.
More precisely, if the number of all possible chains
that can be constructed over the network function instances is polynomial in
the network size $n$ then the number of variables in the 0-1 program is
also polynomial, and thus the problem is in NP.
If $m_i$ is the number of instances of network function $f_i$ in the network,
$i= 1,...,\ell$,
and $m=\max_i\{m_i\}$, then the size of the 0-1 program
is polynomial for $m^\ell = \textrm{poly}(n)$.
This always holds for constant $\ell$.
When $m$ is constant, then it holds for $\ell = O(\log n)$.
\end{enumerate}

\subsection{Outline}

This paper is organized as follows.
Section~\ref{sec:model} introduces our model
and puts the model into perspective with respect to classic online
optimization problems.
We present the 0-1 program in Section~\ref{sec:01}:
the section also serves as a formal model for our problem.
Section~\ref{sec:algoanalysis} presents and analyzes
the $O(\log{\ell})$-approximation algorithm, and
Section~\ref{sec:lowerbound}
presents our lower bound.
%After reviewing related work in Section~\ref{sec:relwork},
We summarize our results and conclude our work in
Section~\ref{sec:summary}.

\section{Model}\label{sec:model}

We are given an undirected network $G=( V,E)$ with
$n=|V|$ nodes and $m=|E|$ edges.
On this graph, we need to route a sequence of requests $\sigma=(\sigma_1,\sigma_2, \ldots, \sigma_k)$:
$\sigma_i$ for any $i$ represents a node pair $\sigma_i=(s_i,t_i) \in V\times V$.
Each pair $\sigma_i$
needs to be routed (from $s_i$ to $t_i$) via 
a sequence of $\ell$ network functions $(F_1,\ldots,F_{\ell})$.
For each network function type
$F_i$, there exist multiple instantiations 
$f_i^{(1)}, f_i^{(2)}, \ldots$ in the network. (We will omit the superscript if it is irrelevant or
clear in the context.)
Each of these instances can be applied to $\sigma_i$ along the route
from $s_i$ to $t_i$. However, in order to minimize the detour
via these functions and in order to keep the route from
$s_i$ to $t_i$ short, a ``nearby instance'' $f_i^{(j)}$ should be chosen,
for each $i$. 
A service chain instance for $(s_i,t_i)$ is denoted by
 $c_i=(f_1^{(x_1)},f_2^{(x_2)},\ldots,f_{\ell}^{(x_{\ell})})$,
for some function instances $f_j^{(x_y)}$, $j\in[1,{\ell}]$.
%We will call $c_i$ the \emph{service chain} instance.

%Each $v\in V$ can host at most one network function.
%
%We study the problem on where on $V$ to instantiate a set of network function types
%$\mathcal{F}=\{F_1,\ldots,F_{\ell}\}$, for some constant $\ell$. Each function type $F_j\in\mathcal{F}$ can be instantiated on one
%or multiple nodes $v\in V$, and we will refer to the instances of type $F_i$
%by $f_i^{(1)}, f_i^{(2)}, \ldots$. We omit the superscript if it is irrelevant or
%clear in the context.
%Each $v\in V$ can host at most one network function.
%

For ease of presentation, we will initially assume that
requests $\sigma_i$ are of infinite duration. We will later show how
to generalize our results to scenarios where requests can have arbitrary
and unknown durations.

\begin{figure}[ht]
\centering
\includegraphics[width=0.4\columnwidth]{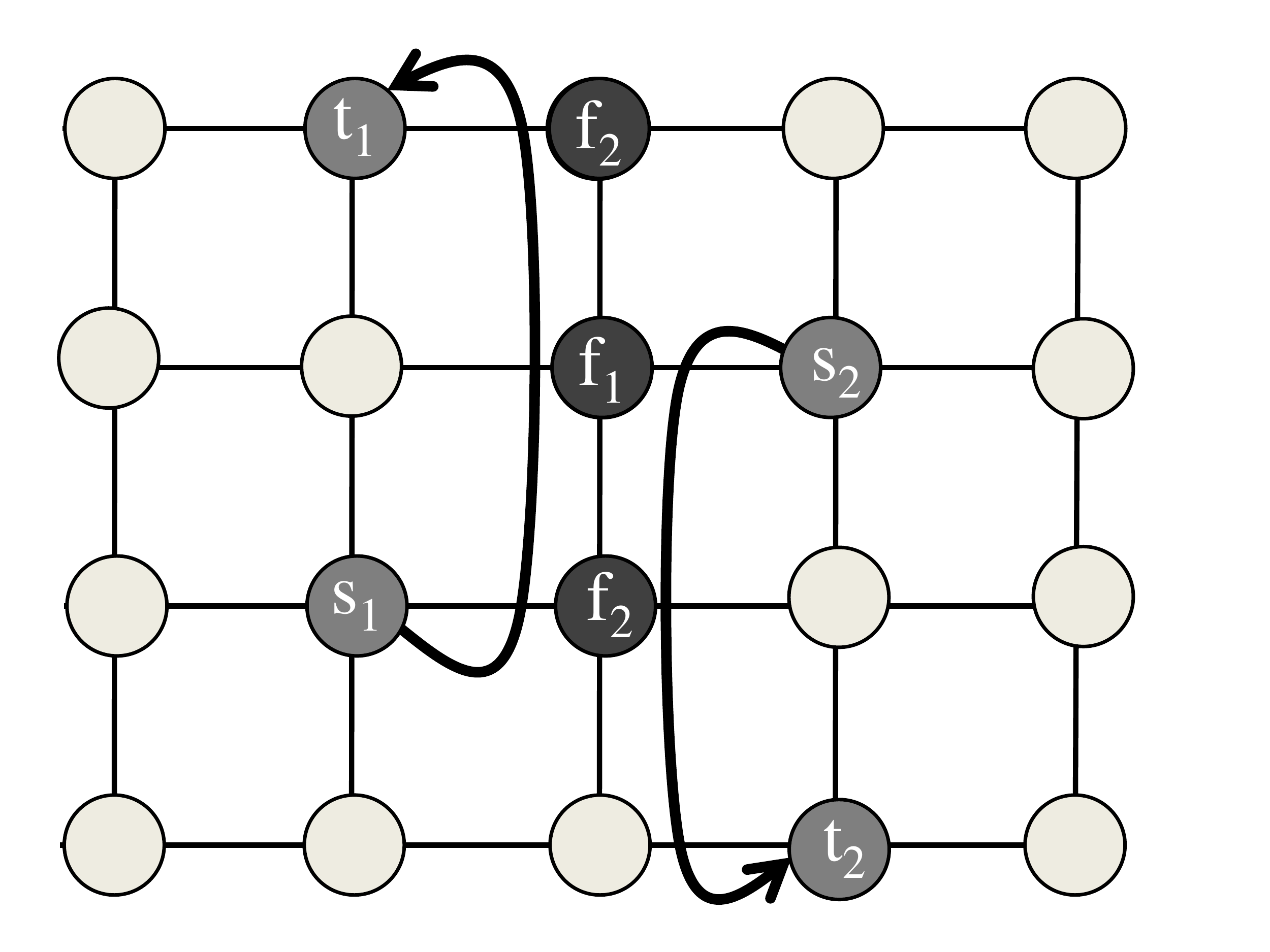}
\caption{Illustration of the model: The communication from $s_1$ to $t_1$
and from $s_2$ to $t_2$ needs to be routed via a service chain $(F_1,F_2)$.
In this example, function $F_1$ is instantiated once, and function $F_2$ is
instantiated twice. Resources for $(s_1,t_1)$ are allocated only at the
second instance of $F_2$ (the upper one).}
\label{fig:model}
\end{figure}

Concretely, in order to satisfy a request $\sigma_i=(s_i,t_i)$, a route of the following form must
be computed:
\begin{enumerate}
\item The route must start at $s_i$, traverse a sequence of network functions
$(f_1^{(x_1)},f_2^{(x_2)},\ldots,f_{\ell}^{(x_{\ell})})$, and end at $t_i$. Here, $f_j^{(x_y)}$, $j\in[1,{\ell}]$
is an instance of the network function
of type $F_j$.
\item The route must not violate capacity constraints on any node $v\in V$.
Nodes $v\in V$ are capacitated
and resources need to be allocated for each network function which is used
for any $(s_i,t_i)$ pair. Multiple network functions may be available on the same
physical machine, and only consume resources once they are used in certain
service chains. 
The capacity $\kappa(v)$ of
each node $v\in V$ hence defines the maximum number of requests $\sigma_i$
for which $v$ can apply its network functions. However,
node $v$ can always simply serve as a regular forwarding node for other requests,
without applying the
function. 
\item The route should be of (hop) length at most $r$ (or have a bounded stretch, see Section~\ref{ssec:stretch}).
\end{enumerate}
Otherwise, a request $\sigma_i$ must be rejected.
For ease of notation, in the following, we will sometimes assume that for a rejected
request $\sigma_i$,
$c_i=\emptyset$.
Also note that the resulting route may not form a simple path, but more generally
describes a \emph{walk}: it may contain
forwarding loops (e.g., visit a network function and come back).

Our objective is to maximize the number of satisfied requests $\sigma_i$,
resp.~to embed a maximum number of service chains.
We are mainly interested in the online variant of the problem,
where $\sigma$ is revealed over time.
More precisely, and as usual in the realm of online algorithms and
competitive analysis, we seek to devise an online algorithm which minimizes
the so-called \emph{competitive ratio}:
Let $\textbf{ON}(\sigma)$ denote the cost
of a given online algorithm for $\sigma$ and let
$\textbf{OFF}(\sigma)$ denote the cost of an optimal offline algorithm.
The competitive ratio $\rho$ is
defined as the worst ratio (over all possible $\sigma$) of the cost of
$\textbf{ON}$ compared to $\textbf{OFF}$. Formally, $\rho = \max_{\sigma}
\textbf{ON}(\sigma)/\textbf{OFF}(\sigma)$.

Note that solving
this optimization problem consists of two subtasks:
\begin{enumerate}
\item \emph{Admission control:} Which requests $\sigma_i$ to admit,
and which to reject?
\item \emph{Assignment and routing:} We need to assign $\sigma_i=(s_i,t_i)$ pairs to a sequence
of network functions and route the flow through them accordingly.
%Where to apply
%the network functions of the different types to $\sigma_i=(s_i,t_i)$?
\end{enumerate}

See Figure~\ref{fig:lb} for an illustration of our model.

\subsection{Putting the Model into Perspective}

From an algorithmic perspective, the models closest to ours occur in the context of online call admission
respectively virtual circuit routing.
There, the fundamental problem is to decide, in an online manner, which
``calls'' resp.~``virtual circuits'' or entire networks, to admit and how to route them, in
a link-capacitated graph.~\cite{AwerbuchAP93,AwerbuchAPW94,DBLP:conf/sirocco/EvenM13,tcs12vnet,Plotkin95}
%
%Depending on the setting, these requests have to be served instantaneously
%or can be delayed, have bounded or unknown duration,
%can and cannot be preempted, ...
%

Instead of routes, in our model, service functions have to
be allocated and connected to form service chains.
In particular, in our model, nodes have a limited capacity
and can only serve as
network functions for a bounded number of source-destination pairs.
The actual routes taken in the network play a secondary
role, and may even contain loops. In particular, our model
supports the specification of explicit constraints on the length
of a route, but also on the stretch: the factor by which the length
of a route
from a source to a destination can be increased due to the need
to visit certain network functions.

Nevertheless, as this paper shows, several techniques from
classic literature on online call control can be applied to our
model. At the same time, to the best of our knowledge, some of our results
also provide new insights into the classic variants
of call admission control. For example, our lower bounds on the
approximation ratio also translate to classic problems,
which so far have mainly been studied from an online perspective.

\section{Optimal 0-1 Program and NP-Completeness}\label{sec:01}

SCEP can be formulated
as a 0-1 integer linear program.
This together with our hardness results also proves NP-completeness for constant $\ell$:
0-1 integer linear programming is one of Karp's
NP-complete problems~\cite{Karp72}.

Let $\sigma=\{\sigma_i = (s_i,t_i) : s_i,t_i \in V \}$ be the set of
requests, and
let $\mathcal{C}$ be the set of possible chains over $k$ nodes,
respecting route length constraints.
We refer by $c\in \mathcal{C}$ to a potential chain.
For all potential chains $c\in \mathcal{C}$, let $S_c$ be the
set of connection requests in $\sigma$ that can be routed through $c$ on a path
of length at most $r$, i.e., for $c=(v_1,...,v_\ell)$, let
$S_c = \{\sigma_i=(s_i,t_i)\in \sigma$ : $d(s_i,v_1)+\sum_{i=2}^k d(v_{i-1},v_{i})+d(v_k,t_i)\leq r\}$,
where $d(u,v)$ denotes the length of the shortest path between
nodes $u,v\in V$ in the network $G$.
The shortest paths between nodes can be computed in a preprocessing step.

For all connection requests $\sigma_i\in \sigma$, we introduce the  binary variable $x_{i}\in \{0,1\}$.
The variable $x_i=1$ indicates that the request $i$ is admitted in the solution.
For all potential network function chains $c\in\mathcal{C}$, we introduce the
binary variable $x_{c}\in \{0,1\}$.
The variable $x_{c}=1$ indicates that $C$ is selected in the solution.
For all $c\in \mathcal{C}$ and $\sigma_i\in \sigma$, we introduce the binary
variable $x_{c,i}\in \{0,1\}$.
The variable $x_{c,i}$ indicates that the request $\sigma_i=(s_i,t_i)\in \sigma$
is routed through the nodes of $c$, such that the length of the walk from
$s_i$ to $t_i$ through $c$ has length at most~$r$.

\begin{comment} % chain minimization
%\begin{displaymath}
\begin{eqnarray}
\textrm{minimize}\qquad  \sum_{c\in\mathcal{C}} x_{c} & &  \label{P01_1}\\
\textrm{s.t.}\qquad \sum_{c\in\mathcal{C}} x_{cq} = 1 & &\qquad \forall\ q\in \sigma \label{P01_2} \\
\sum_{c\in\mathcal{C}: q\not\in S_{c}} x_{cq} = 0  & &\qquad \forall\ q\in \sigma \label{P01_3}\\
x_c \leq x_v & & \qquad \forall\ v\in V, \forall\ c\in\mathcal{C}: v\in c \label{P01_4}\\
\sum_{c\in\mathcal{C}: v\in c} x_c \geq x_v & & \qquad \forall\ v\in V \label{P01_5}\\
\sum_{q\in \sigma}\sum_{c\in\mathcal{C}: v\in c} x_{cq} \leq \kappa(v)\cdot x_v  & &\qquad \forall\ v\in V \label{P01_6}\\
x_{v}, x_c, x_{cq}\in\{0,1\}  & &\qquad \forall\ v\in V, \forall\ c\in\mathcal{C}, \forall\ q\in \sigma
\end{eqnarray}
%\end{displaymath}
\end{comment}

%\begin{displaymath}
\begin{eqnarray} %maximizing the number of admited requests
\textrm{maximize}\qquad  \sum_{\sigma_i\in \sigma} x_{i} & &  \label{P01_1}\\
\textrm{s.t.}\qquad x_i - \sum_{c\in\mathcal{C}} x_{c,i} = 0 & &\qquad \forall\ \sigma_i\in \sigma \label{P01_2} \\
\sum_{c\in\mathcal{C}: \sigma_i\not\in S_{c}} x_{c,i} = 0  & &\qquad \forall\ \sigma_i\in \sigma \label{P01_3}\\
x_c \leq x_v & & \qquad \forall\ v\in V, \forall\ c\in\mathcal{C}: v\in c \label{P01_4}\\
\sum_{c\in\mathcal{C}: v\in c} x_c \geq x_v & & \qquad \forall\ v\in V \label{P01_5}\\
\sum_{\sigma_i\in \sigma}\sum_{c\in\mathcal{C}: v\in c} x_{c,i} \leq \kappa(v)\cdot x_v  & &\qquad \forall\ v\in V \label{P01_6}\\
x_i, x_{v}, x_c, x_{c,i}\in\{0,1\}  & &\qquad \forall\ v\in V, \forall\ c\in\mathcal{C}, \forall\ \sigma_i\in \sigma
\end{eqnarray}
%\end{displaymath}

The objective function (\ref{P01_1}) asks for
admitting a request set of maximum cardinality.
%The constraints (\ref{P01_2}) declare that each pair $q=(s_i,t_i)\in \sigma$
%is assigned to exactly one chain $c\in\mathcal{C}$.
The Constraints (\ref{P01_2}) enforce that each admitted request $\sigma_i\in \sigma$
is assigned to exactly one chain $c\in\mathcal{C}$, and rejected requests are not assigned
to
any chain, i.e., for each $\sigma_i$ with  $x_i=1$, there is exactly one chain $c$
with $x_{c,i}=1$, and for each $i$ with $x_i=0$, we have $x_{c,i}=0$ for all~$c$.
Constraints (\ref{P01_3}) state that each $\sigma_i\in \sigma$ can
only be assigned to a chain $c\in \mathcal{C}$ with $\sigma_i\in S_{c}$.
By definition of $S_c$, the nodes $s_i$ and $t_i$ can be routed through $c$
by a path of length at most~$r$.
Constraints (\ref{P01_4}) ensure that if a node $v\in V$ is contained in a selected chain $c$
(i.e., $x_c$ = 1), then $x_v = 1$.
Constraints (\ref{P01_5}) enforce that if a node $v\in V$ is not contained in any selected chain,
i.e., $x_c = 0$ for all chains $c$ with $v\in c$, then $x_v = 0$.
Therefore, Constraints (\ref{P01_4}) and (\ref{P01_5}) together imply that
$x_v=1$ iff $v$ is contained in a selected chain~$c$.
Constraints (\ref{P01_6}) describe that the number of requests routed through
a node $v$ of a selected chain is limited by the capacity $\kappa(v)$ of $v$.
Furthermore, (\ref{P01_6}) ensures that if $v$ is not contained in any selected chain
(i.e., $x_v=0$) then no request $q$ is assigned to any chain $c$ with $v\in c$.

The solution of this 0-1 program defines a maximum cardinality
set of admitted requests $\sigma_{admit}=\{\sigma_i : x_i = 1\}$, and an assignment of
each request $\sigma_i\in \sigma_{admit}$ to a chain $c\in \mathcal{C}$.
Each request $\sigma_i\in \sigma_{admit}$ is assigned to a chain $C\in \mathcal{C}$ iff $x_{c,i} = 1$.
This assignment guarantees that
($i$) the request $\sigma_i=(s_i,t_i)$ can be routed through $c$ on a path of length at most $r$,
($ii$) the number of pairs routed through any node $v\in V$ of a selected chain
is limited by the capacity $\kappa(v)$ of $v$, and
($iii$) none of the requests $\sigma_i\in \sigma_{admit}$ are assigned to a non selected chain.
Furthermore, it is guaranteed that rejected requests $\sigma_i\in \sigma\setminus \sigma_{admit}$
are not assigned to any chain.

\section{Competitive Online Algorithm}\label{sec:algoanalysis}

We present an online algorithm $\ALG$ for
OSCEP. $\ALG$ admits and embeds at least a $\Omega(\log {\ell})$-fraction
of the number of requests embedded by an optimal offline algorithm $\OFF$.

Let us first introduce some notation. Let $A_j$ be the set of indices of the
requests admitted by $\ALG$ just
\emph{before} considering the $j$th
request $\sigma_j$. The index set of all admitted requests after processing all
$k$ requests in $\sigma$, will be denoted by $A_{k+1}$ resp.~$A$.

The relative load $\lambda_v(j)$ at node $v$
before processing the $j$th request, is defined by the number of
service chains $c_i$ in which $v$ participates, divided
by $v$'s capacity:
$$
%\lambda_v(j) = \frac{|v\in c_i, \forall i\in A_j|}{\kappa(v)}.
\lambda_v(j) = \frac{|\{c_i\ :\ i\in A_j, v\in c_i\}|}{\kappa(v)}.
$$
We seek to ensure the invariant that capacity constraints are enforced at each node,
i.e., $\forall\ v\in V, j\leq k+1 : \lambda_v(j) \leq 1$.

We define $\mu=2 {\ell}+2$, and in the following, will assume that
\begin{eqnarray}
\min_v\{\kappa(v)\} \geq \log\mu \label{label:mincap}
\end{eqnarray}
% Delete the next sentence?
%As we will see in Section~\ref{sec:lowerbound}, for small capacities,
%not even polylogarithmic approximations are possible in polynomial time.

\subsection{Algorithm}

The key idea of $\ALG$ is to assign to each node, a cost which is exponential in
the relative node load. More precisely,
with each node we associate a cost $w_v(j)$ just before processing the $j$th
request $\sigma_j$: %c_v(j) in Plotkin is renamed to w_v(j). We denote by c te chains
$$
w_v(j) = \kappa(v)(\mu^{\lambda_v(j)}-1).
$$

Our online algorithm $\ALG$ simply proceeds as follows:
\begin{itemize}
\item When request $\sigma_j$ arrives, $\ALG$ checks if there exists a chain $c_j$,
$\sigma_j\in S_{c_j}$, satisfying the following condition:
\begin{eqnarray}
\sum_{v\in c_j}\frac{w_v(j)}{\kappa(v)} \leq {\ell} \label{eqn:admit}
\end{eqnarray}
\item If such a chain $c_j$ exists, then \emph{admit}~$\sigma_j$ and
assign it to $c_j$. Otherwise, reject~$\sigma_j$.
\end{itemize}

In order to ensure that chains selected for Condition~\ref{eqn:admit}
also fulfill the constraint on the maximal route length,
$\ALG$ simply uses preprocessing. We maintain at each node its relative load.
When a new request arrives $\ALG$ has to test
the costs of at most $O(n^\ell)$ chains, and the cost can be computed in $O(\ell)$
time per chain. The overall runtime of $\ALG$
per step is hence bounded by $O(\ell\cdot n^{\ell})$, which is polynomial
for constant $\ell$.

% Note: we need to try all $n^{\ell}$ possiblities here

\subsection{Analysis}\label{ssec:stretch}

The analysis of the competitive ratio achieved by $\ALG$
exploits a connection to Virtual Circuit Routing~\cite{Plotkin95}
and unfolds in three lemmata. First, in Lemma~\ref{lemma:capacity}
we prove that
the set $A$ of requests admitted by $\ALG$ are feasible and respect capacity constraints.
Second, in Lemma~\ref{lemma:online}, we show that at any moment in time,
the sum of node costs is
within a factor $O({\ell}\cdot \log \mu)$
of the number of requests already admitted by $\ALG$.
Third, in Lemma~\ref{lemma:offline}, we prove that
the number of requests admitted by the optimal offline algorithm $\OFF$
but rejected by the online algorithm,
is bounded by the sum of node costs after processing all requests.

Let $W$  be the sum of the node costs after $\ALG$ processed
all $k$ request, let $A_{\OFF}$ be the indices of the requests admitted by $\OFF$,
and let $A^* = A_{\OFF} \setminus A$.
Then, from Lemma~\ref{lemma:online} we will obtain a bound
$|A|\geq W /( 2{\ell}\cdot \log \mu)$,
and from Lemma~\ref{lemma:offline} that $|A^*| \leq W/\ell$.

Thus, even by conservatively ignoring all the requests which $\ALG$
might have admitted which $\OFF$ did not, we obtain
that the competitive ratio of $\ALG$ is at most $O(\log \ell)$.

Let us now have a closer look at the first helper lemma.
\begin{lemma} \label{lemma:capacity}
For all nodes $v\in V$:
$$
 \sum_{j\in A: v\in c_j} 1 \leq \kappa(v).
$$
\end{lemma}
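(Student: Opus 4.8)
The plan is to establish the stronger invariant that $\lambda_v(j)\le 1$ for every node $v\in V$ and every step $j\le k+1$; the lemma is then just the case $j=k+1$, since by definition $\sum_{j\in A:\,v\in c_j}1 = \kappa(v)\cdot\lambda_v(k+1)$. I would prove the invariant by induction on $j$. The base case $j=1$ is immediate because $A_1=\emptyset$ and hence $\lambda_v(1)=0$ for all $v$. For the inductive step, note that the only event that can raise any relative load is the admission of $\sigma_j$ to some chain $c_j$, and in that event $\lambda_v$ grows by exactly $1/\kappa(v)$ for the nodes $v\in c_j$ and stays unchanged for all other nodes; so it suffices to show $\lambda_v(j)+1/\kappa(v)\le 1$ whenever $v\in c_j$ and $\sigma_j$ is admitted.

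The crucial step invokes the admission rule. Since $\sigma_j$ was admitted along $c_j$, Condition~(\ref{eqn:admit}) holds; and because each summand $w_u(j)/\kappa(u)=\mu^{\lambda_u(j)}-1$ is nonnegative, the single term for $v$ already satisfies $\mu^{\lambda_v(j)}-1\le \ell$, i.e.\ $\mu^{\lambda_v(j)}\le \ell+1$. Now I would substitute $\mu=2\ell+2$ and use the capacity assumption~(\ref{label:mincap}): from $\kappa(v)\ge\log\mu$ (with $\log=\log_2$) we get $\mu^{1/\kappa(v)}\le \mu^{1/\log_2\mu}=2$. Multiplying the two bounds, $\mu^{\lambda_v(j)+1/\kappa(v)}=\mu^{\lambda_v(j)}\cdot\mu^{1/\kappa(v)}\le 2(\ell+1)=\mu$, hence $\lambda_v(j)+1/\kappa(v)\le 1$. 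Since $\lambda_v(j+1)$ equals $\lambda_v(j)+1/\kappa(v)$ when $v\in c_j$ and equals $\lambda_v(j)$ otherwise, the invariant $\lambda_v(\cdot)\le 1$ is preserved, which closes the induction and yields the lemma.

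This is essentially the standard exponential-potential admission argument of Awerbuch--Azar--Plotkin, transferred from edge loads to node capacities, so I do not anticipate a genuine obstacle. The only points needing care are the bookkeeping of exactly when and by how much $\lambda_v$ changes (it is monotone nondecreasing and jumps by $1/\kappa(v)$ only upon an admission using $v$), and the choice of logarithm base in~(\ref{label:mincap}): the argument needs precisely the value of $\kappa(v)$ that makes $\mu^{1/\kappa(v)}\le \mu/(\ell+1)=2$, i.e.\ $\kappa(v)\ge\log_2\mu$, which is exactly what the assumption provides.
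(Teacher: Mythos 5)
Your proof is correct and is essentially the paper's argument in contrapositive form: the paper takes the first admission that would push some $\lambda_v$ above $1$ and derives $w_v(j)/\kappa(v)=\mu^{\lambda_v(j)}-1>\mu^{1-1/\log\mu}-1=\ell$, contradicting Condition~(\ref{eqn:admit}), whereas you run the same computation forward as an induction preserving $\lambda_v\le 1$. The key numerical facts used --- nonnegativity of each summand, $\mu^{1/\kappa(v)}\le 2$ from Assumption~(\ref{label:mincap}), and $\mu/2-1=\ell$ --- are identical.
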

\begin{proof}
Let $\sigma_j$ be the first request admitted by $\ALG$,
such that the relative load $\lambda_v(j+1)$ at some node
$v\in c_j$ exceeds~1. By definition of the relative load we have
$\lambda_v(j) > 1- 1/\kappa(v)$.

By the assumption that $\log \mu \leq \kappa(v)$, we get
\begin{eqnarray*}
\frac{w_v(j)}{\kappa(v)} = \mu^{\lambda_v(j)} -1
 > \mu^{1-1/\log\mu}-1
 = \mu/2-1 = {\ell}.
\end{eqnarray*}
Therefore, by Condition~(\ref{eqn:admit}), the request $\sigma_j$ could not be assigned
to~$c_j$. We established a contradiction.
\qed
\end{proof}

Next we show that the sum of node costs is
within an $O({\ell}\cdot \log \mu)$ factor
of the number of already admitted requests.
\begin{lemma} \label{lemma:online}
Let $A$ be the set of indices of requests admitted by the online algorithm.
Let $k$ be the index of the last request. Then
$$
  (2 {\ell} \log\mu)|A| \geq \sum_v w_v(k+1).
$$
\end{lemma}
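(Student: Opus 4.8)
The plan is to prove the bound by induction on the number of requests processed, tracking how the total node cost $\sum_v w_v(j)$ changes each time $\ALG$ admits a request. The base case is trivial: before any request arrives all relative loads are $0$, so every $w_v = \kappa(v)(\mu^0-1) = 0$, and $|A_1| = 0$, so the inequality $(2\ell\log\mu)|A_1| \ge \sum_v w_v(1)$ holds with both sides zero. For the inductive step, it suffices to show that whenever $\ALG$ admits request $\sigma_j$ and assigns it to chain $c_j$, the increase in total cost satisfies
$$
\sum_v w_v(j+1) - \sum_v w_v(j) \;\le\; 2\ell\log\mu,
$$
since rejected requests change no loads and hence no costs, and $|A_{j+1}| = |A_j|+1$ in the admitted case.

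The core of the argument is to bound the per-node cost increase at each $v \in c_j$. When $\sigma_j$ is assigned to $c_j$, each node $v\in c_j$ has its load go from $\lambda_v(j)$ to $\lambda_v(j) + 1/\kappa(v)$, so
$$
w_v(j+1) - w_v(j) \;=\; \kappa(v)\Bigl(\mu^{\lambda_v(j)+1/\kappa(v)} - \mu^{\lambda_v(j)}\Bigr) \;=\; \kappa(v)\,\mu^{\lambda_v(j)}\bigl(\mu^{1/\kappa(v)} - 1\bigr).
$$
Here I would use the elementary inequality $\mu^{1/\kappa(v)} - 1 = e^{(\ln\mu)/\kappa(v)} - 1 \le (\ln\mu/\kappa(v))\cdot e^{(\ln\mu)/\kappa(v)}$, or more simply the bound $a^x - 1 \le x\,a\,\ln a$ valid for $x\in[0,1]$ and $a\ge 1$ applied with $x = 1/\kappa(v) \le 1$ (using $\min_v \kappa(v)\ge\log\mu\ge 1$), giving $\mu^{1/\kappa(v)}-1 \le (\log\mu/\kappa(v))\cdot\mu^{1/\kappa(v)} \le (2\log\mu)/\kappa(v)$ once we observe $\mu^{1/\kappa(v)}\le\mu^{1/\log\mu} \le 2$. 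Substituting back,
$$
w_v(j+1) - w_v(j) \;\le\; \kappa(v)\,\mu^{\lambda_v(j)}\cdot\frac{2\log\mu}{\kappa(v)} \;=\; 2\log\mu\cdot\mu^{\lambda_v(j)} \;=\; 2\log\mu\Bigl(\frac{w_v(j)}{\kappa(v)} + 1\Bigr).
$$
Summing over $v\in c_j$ and using the admission condition~(\ref{eqn:admit}), namely $\sum_{v\in c_j} w_v(j)/\kappa(v) \le \ell$, together with $|c_j| = \ell$, yields
$$
\sum_{v\in c_j}\bigl(w_v(j+1)-w_v(j)\bigr) \;\le\; 2\log\mu\Bigl(\ell + \ell\Bigr) \;=\; 4\ell\log\mu.
$$

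The one loose end is that this gives a constant $4\ell\log\mu$ rather than the claimed $2\ell\log\mu$; I expect the main obstacle to be tightening the constant. This is fixed by being more careful in the two places where I threw away a factor of roughly $2$: instead of bounding $\mu^{1/\kappa(v)}\le 2$ and separately $\mu^{1/\kappa(v)}-1 \le (\log\mu/\kappa(v))\mu^{1/\kappa(v)}$, one uses the sharper estimate $\mu^{1/\kappa(v)} - 1 \le (\log\mu/\kappa(v))$ directly — which holds because for $x\in[0,1]$, $a^x - 1 \le x(a-1)$ by convexity is too weak, but $\mu^{1/\kappa(v)}-1 \le \frac{\log\mu}{\kappa(v)}\cdot\frac{\mu^{1/\kappa(v)}-1}{(\log\mu)/\kappa(v)} $ and the last factor is at most $\ln 2 / (\ln 2 \cdot \log_2 e)$... rather, the clean route is: since $g(x)=\mu^x$ is convex and $x = 1/\kappa(v)\in(0, 1/\log\mu]$, we have $\mu^x - 1 \le x\cdot\frac{\mu^{1/\log\mu}-1}{1/\log\mu} = x\log\mu\cdot(\mu^{1/\log\mu}-1) = x\log\mu$ using $\mu^{1/\log\mu} = 2$. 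Then $w_v(j+1)-w_v(j) \le \kappa(v)\mu^{\lambda_v(j)}\cdot\frac{\log\mu}{\kappa(v)} = \log\mu(w_v(j)/\kappa(v)+1)$, and summing over the $\ell$ nodes of $c_j$ with condition~(\ref{eqn:admit}) gives exactly $\log\mu(\ell+\ell) = 2\ell\log\mu$, completing the induction.
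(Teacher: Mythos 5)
Your proof is correct and follows essentially the same route as the paper: induction over requests, bounding the per-node cost increment at each $v\in c_j$ by $\log\mu\,(w_v(j)/\kappa(v)+1)$ and then summing via the admission condition~(\ref{eqn:admit}) and $|c_j|=\ell$. Your final ``clean route'' (the chord bound for the convex function $\mu^x$ on $[0,1/\log\mu]$) is just a restatement of the paper's use of $2^x-1\le x$ for $x\in[0,1]$ with $x=(\log\mu)/\kappa(v)$ under Assumption~(\ref{label:mincap}), so the earlier $4\ell\log\mu$ detour was unnecessary.
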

\begin{proof}
We show the claim by induction on $k$. For $k = 0$, both sides of the inequality
are zero, thus the claim is trivially true.
Rejected requests do not change either side of the inequality.
Thus, it is enough to show that, for each $j\leq k$, if we admit $\sigma_j$,
we get:
$$
\sum_v(w_v(j+1)-w_v(j)) \leq 2{\ell}\log\mu.
$$
Consider a node $v\in c_j$. Then by definition of the costs:
\begin{eqnarray*}
w_v(j+1)-w_v(j) & &= \kappa(v)(\mu^{\lambda_v(j)+1/\kappa(v)} - \mu^{\lambda_v(j)})\\
& &= \kappa(v)(\mu^{\lambda_v(j)}(\mu^{1/\kappa(v)} - 1))\\
& &= \kappa(v)(\mu^{\lambda_v(j)}(2^{(\log\mu) \cdot 1/\kappa(v)} - 1))
\end{eqnarray*}
By Assumption~(\ref{label:mincap}), $1\leq \kappa(v)/\log\mu$.
Since $2^x - 1 \leq x$, for $0\leq x\leq 1$, it follows:
\begin{eqnarray*}
w_v(j+1)-w_v(j) & &\leq \mu^{\lambda_v(j)} \log\mu\\
& & = \log\mu (w_v(j)/\kappa(v) +1).
\end{eqnarray*}
Summing up over all the nodes and using the fact that the request $\sigma_j$ was admitted
and chain $c_j$ was assigned, and that the number of nodes $|c_j|$ in $c_j$ is ${\ell}$,
we get:
$$
\sum_v(w_v(j+1)- w_v(j)) \leq \log \mu ({\ell} + |c_j|) = 2{\ell}\log\mu.
$$
This proves the claim.
\qed
\end{proof}

We finally prove that ${\ell}$ times the number of requests
rejected by $\ALG$ but admitted by the optimal offline algorithm
$\OFF$ is bounded by the sum of node costs after processing all requests.

\begin{lemma} \label{lemma:offline}
 Let $A_{\OFF}$ be the set of indices of the requests that were admitted
 by the optimal offline algorithm, and let $A^* = A_{\OFF} \setminus A$ be the
 set of indices of requests admitted by $A_{\OFF}$ but rejected by the
 online algorithm.
 Then:
 $$
 |A^*|\cdot {\ell} \leq \sum_{v} w_v(k+1).
$$
\end{lemma}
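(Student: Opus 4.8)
The plan is to show that every request $\sigma_i$ with $i\in A^*$ — that is, admitted by $\OFF$ but rejected by $\ALG$ — must have been rejected precisely because \emph{every} chain $c$ that $\OFF$ could legally have used for it (in particular the chain $c_i^{\OFF}$ that $\OFF$ actually uses) violated the admission Condition~(\ref{eqn:admit}) \emph{at the time $\sigma_i$ was considered}. Concretely, for each $i\in A^*$, let $c_i^{\OFF}$ be the chain assigned to $\sigma_i$ by $\OFF$; since $\sigma_i\in S_{c_i^{\OFF}}$ and $\ALG$ nonetheless rejected $\sigma_i$, Condition~(\ref{eqn:admit}) fails for this chain, so
$$
\sum_{v\in c_i^{\OFF}} \frac{w_v(i)}{\kappa(v)} > {\ell}.
$$
Summing this inequality over all $i\in A^*$ gives $|A^*|\cdot{\ell} < \sum_{i\in A^*}\sum_{v\in c_i^{\OFF}} w_v(i)/\kappa(v)$, so it suffices to bound the right-hand side by $\sum_v w_v(k+1)$.

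For the second step I would exchange the order of summation, grouping by node: $\sum_{i\in A^*}\sum_{v\in c_i^{\OFF}} w_v(i)/\kappa(v) = \sum_v \frac{1}{\kappa(v)}\sum_{i\in A^*:\, v\in c_i^{\OFF}} w_v(i)$. Now fix a node $v$. The costs $w_v(j)$ are nondecreasing in $j$ (since $\lambda_v$ only grows as $\ALG$ admits more requests), so $w_v(i)\le w_v(k+1)$ for every $i\le k$; hence $\sum_{i\in A^*:\, v\in c_i^{\OFF}} w_v(i) \le w_v(k+1)\cdot |\{i\in A^*:\, v\in c_i^{\OFF}\}|$. The key fact is that the number of requests in $A^*$ whose $\OFF$-chain passes through $v$ is at most $\kappa(v)$: this is exactly the capacity constraint satisfied by the \emph{offline} optimum $\OFF$ at node $v$ (each such request consumes one unit of $v$'s capacity in $\OFF$'s feasible solution, and $A^*\subseteq A_{\OFF}$). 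Therefore $\frac{1}{\kappa(v)}\sum_{i\in A^*:\, v\in c_i^{\OFF}} w_v(i) \le \frac{1}{\kappa(v)}\cdot w_v(k+1)\cdot\kappa(v) = w_v(k+1)$, and summing over $v$ yields $\sum_{i\in A^*}\sum_{v\in c_i^{\OFF}} w_v(i)/\kappa(v) \le \sum_v w_v(k+1)$. Combining with the first step gives $|A^*|\cdot {\ell} < \sum_v w_v(k+1)$, which implies the claimed (non-strict) inequality.

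The step I expect to be the main obstacle — or at least the one needing the most care in the write-up — is the monotonicity/capacity bookkeeping in the second step: one has to be careful that $w_v(i)$ is evaluated at the time request $i$ was processed by $\ALG$, that these values are indeed monotone in the global request index (rejected requests leave $w_v$ unchanged, admitted ones can only increase it), and that the bound ``$|\{i\in A^*: v\in c_i^{\OFF}\}|\le\kappa(v)$'' is legitimately inherited from feasibility of $\OFF$ rather than from anything about $\ALG$. Everything else is a routine summation exchange plus the direct use of the failed admission condition. A minor point worth stating explicitly is that $A^*\subseteq A_{\OFF}$ so the offline capacity constraint applies to exactly the subset of requests being summed. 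No appeal to Lemma~\ref{lemma:capacity} or Lemma~\ref{lemma:online} is needed here; this lemma stands on the offline feasibility and the definition of the admission rule alone.
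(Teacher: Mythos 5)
Your proof is correct and follows essentially the same route as the paper's: use the failed admission condition for the chain $c_i^{\OFF}$ that $\OFF$ actually assigns, apply monotonicity of $w_v$, exchange the order of summation, and bound $|\{i\in A^*: v\in c_i^{\OFF}\}|\le\kappa(v)$ via the feasibility of the \emph{offline} solution. The only (immaterial) difference is that you apply the summation exchange before the monotonicity bound, whereas the paper does it in the opposite order; your explicit remark that the capacity bound is inherited from $\OFF$'s feasibility is a correct and slightly more careful reading of the paper's terse final sentence.
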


\begin{proof}
For $j\in A^*$, let $c^*_j$ be the chain assigned to request $\sigma_j$
by the optimal offline algorithm. By the fact that $\sigma_j$ was rejected by the
online algorithm, we have:
$$
{\ell} < \sum_{v\in c^*_j}\frac{w_v(j)}{\kappa(v)}.
$$
Since the costs $w_v(j)$ are monotonically increasing in $j$, we have
$$
{\ell} < \sum_{v\in c^*_j}\frac{w_v(j)}{\kappa(v)} \leq \sum_{v\in c^*_j}\frac{w_v(k+1)}{\kappa(v)}.
$$
Summing over all $j\in A^*$, we get
\begin{eqnarray*}
|A^*|{\ell} & &\leq \sum_{j\in A^*} \sum_{v\in c^*_j} \frac{w_v(k+1)}{\kappa(v)}\\
& & \leq \sum_{v} w_v(k+1)\cdot \sum_{j\in A^*: v\in c^*_j} \frac{1}{\kappa(v)}\\
& & \leq \sum_{v} w_v(k+1).
\end{eqnarray*}
The last inequality follows from the fact that
capacity constraints need to be met at any time.
\qed
\end{proof}

\begin{theorem} \label{theo:online}
$\ALG$ is $O(\log{\ell})$-competitive.
\end{theorem}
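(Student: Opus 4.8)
The plan is to stitch the three preceding lemmas together; no new idea is needed, only bookkeeping. Write $W=\sum_v w_v(k+1)$ for the total node cost after $\ALG$ has processed all of $\sigma$. Lemma~\ref{lemma:capacity} certifies that the assignment produced by $\ALG$ is feasible, so $|A|=\textbf{ON}(\sigma)$ is a legitimate value of the objective. Lemma~\ref{lemma:online} supplies the lower bound $W\le 2\ell\log\mu\cdot|A|$, and Lemma~\ref{lemma:offline} supplies $\ell\cdot|A^*|\le W$, where $A^*=A_{\OFF}\setminus A$ is the set of requests that $\OFF$ admits but $\ALG$ rejects. I would first record that it suffices to bound $|A_{\OFF}|$ against $|A|$: since $A_{\OFF}\subseteq A\cup A^*$, we have $|A_{\OFF}|\le|A|+|A^*|$, i.e.\ we may ``throw away for free'' the requests that $\ALG$ admitted but $\OFF$ did not (this only helps $\ALG$), exactly as announced in the paragraph preceding Lemma~\ref{lemma:capacity}.

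Then the computation is immediate:
\begin{equation*}
|A_{\OFF}|\ \le\ |A|+|A^*|\ \le\ |A|+\frac{W}{\ell}\ \le\ |A|+\frac{2\ell\log\mu\cdot|A|}{\ell}\ =\ \bigl(1+2\log\mu\bigr)\,|A|.
\end{equation*}
Substituting $\mu=2\ell+2$ gives $|A_{\OFF}|\le\bigl(1+2\log(2\ell+2)\bigr)\,|A|=O(\log\ell)\cdot|A|$. Since this holds for every request sequence $\sigma$, and $\textbf{OFF}(\sigma)=|A_{\OFF}|$, $\textbf{ON}(\sigma)=|A|$, the competitive ratio of $\ALG$ is $O(\log\ell)$, which is the claim.

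The only loose end to tie up is the degenerate case $|A|=0$, where the displayed inequality is vacuous. Here one argues that $\ALG$ admits at least one request whenever $\OFF$ does: taking the first request $\sigma_j$ admitted by $\OFF$, with offline chain $c^*_j$, at the moment $\sigma_j$ is presented to $\ALG$ either $\ALG$ has already admitted something, or all relative loads are $0$, hence all $w_v(j)=0$, hence $\sum_{v\in c^*_j} w_v(j)/\kappa(v)=0\le\ell$ and $c^*_j$ (which satisfies $\sigma_j\in S_{c^*_j}$) witnesses Condition~(\ref{eqn:admit}); either way $|A|\ge1$. And $|A_{\OFF}|=0$ trivially forces $|A|=0$ with ratio $1$. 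I do not expect any genuine obstacle in this theorem: all the technical work — in particular the use of the capacity assumption~(\ref{label:mincap}) to bound the per-admission cost increase and the telescoping/monotonicity arguments — is already discharged in Lemmas~\ref{lemma:capacity}--\ref{lemma:offline}. The only points requiring a little care are the orientation of the competitive ratio (this is a maximization objective, so one bounds $\textbf{OFF}/\textbf{ON}$) and, as above, the trivial $|A|=0$ case.
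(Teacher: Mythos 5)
Your proposal is correct and follows essentially the same route as the paper's proof: chaining $|A_{\OFF}|\le|A|+|A^*|$ with Lemma~\ref{lemma:offline} and then Lemma~\ref{lemma:online} to get $|A_{\OFF}|\le(1+2\log\mu)|A|=O(\log\ell)\,|A|$. Your extra remark on the degenerate case $|A|=0$ is a harmless (and valid) addition that the paper omits.
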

\begin{proof}
By Lemma~\ref{lemma:capacity}, capacity constraints are never violated.
It remains to show that the
number of requests admitted by the online algorithm is at least
$1/(2\log{2\mu})$ times the number of requests admitted by
the optimal offline algorithm.
The number of requests admitted by the optimal offline algorithm $|A_{\OFF}|$
can be bounded by the number of requests admitted by the online algorithm $|A|$
plus the number of requests in $A^* = A_{\OFF}\setminus A$. Therefore,
$$
|A_{\OFF}| \leq |A| + |A^*|.
$$
By Lemma~\ref{lemma:offline} this is bounded by
$$
|A_{\OFF}| \leq |A| + \frac{1}{\ell}\sum_v w_v(k+1).
$$
By Lemma~\ref{lemma:online} this is bounded by
\begin{eqnarray*}
|A_{\OFF}| & &\leq |A| + 2\cdot (\log\mu)  \cdot|A|\\
& & = (1+2\log\mu)|A|
\end{eqnarray*}
Therefore, the number of requests admitted by the optimal offline algorithm
is at most $(1+2\log\mu)$ times the number of requests admitted by
$\ALG$.
\qed
\end{proof}

\textbf{Remarks.}
We conclude with some remarks.
First, we note that our approach leaves us with many flexibilities
in terms of constraining the routes through the network functions.
For instance, we can support maximal path length requirements:
the maximal length of the route from $s$ to $t$ \emph{via the network functions}.
A natural alternative model is to define
a limit on the \emph{stretch}: the factor by which the ``detour'' via
the network functions can be longer than the shortest path from $s$ to $t$.
Moreover, so far, we focused on a model where requests, once admitted,
stay forever. Our approach can also be used to support service chain
requests of bounded or even unknown duration. In particular, by
redefining $\mu$ to take into account the duration of a request, we
can for example apply the technique from~\cite{Plotkin95} to obtain
competitive ratios for more general models.

\section{Optimality and Approximation}\label{sec:lowerbound}

It turns out that $\ALG$ is asymptotically optimal
within the class of %deterministic 
online algorithms (Theorem~\ref{thm:lb-1}).
This section also initiates the study of lower bounds
for (offline) approximation algorithms, and shows
that for low capacities, the problem is APX-hard even for short chains (Theorem~\ref{thm:apx}),
and even Poly-APX-hard in general, that is, it is as hard as any problem that can be approximated to a polynomial factor
in polynomial time
(Theorem~\ref{thm:polyapx}).

\begin{theorem}\label{thm:lb-1}
Any deterministic or randomized online algorithm for OSCEP must have a competitive ratio
of at least $\Omega(\log{\ell})$.
\end{theorem}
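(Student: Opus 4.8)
The plan is to reduce from the classical online virtual-circuit routing / call-control lower bound, which is known to exhibit an $\Omega(\log n)$-type gap between online performance and the offline optimum on a suitable "staircase" or binary-tree-like instance family. Concretely, I would construct a hard instance in which the only freedom the algorithm has is which of a small number of function instances to route a given request through, so that the combinatorial core reduces to packing requests onto $\ell$ congested nodes. The key is to exploit that each service chain must pass through one instance of \emph{each} of the $\ell$ function types, and that instances have capacity roughly $\log\mu = \Theta(\log\ell)$ by the standing assumption; this turns a bad online choice into a commitment that blocks a logarithmic number of future requests.

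First I would fix a layered gadget: $\ell$ layers, layer $i$ containing a few instances of function type $F_i$, wired together so that admitting a request forces it onto exactly one instance per layer subject to the hop-length bound $r$. Second, I would design the request sequence in $\Theta(\log\ell)$ phases, where phase $p$ presents a batch of requests all of whose feasible chains overlap, at one designated "bottleneck" layer, on a single instance that the online algorithm was (partially) forced to load during phases $1,\dots,p-1$; the adversary chooses the continuation adaptively against a deterministic algorithm, and for randomized algorithms one applies Yao's principle to a distribution over such phase structures. Third, I would argue that any online algorithm either refuses many early requests (and then OFF admits them, while ON does not) or commits them to instances that the adversary then saturates, so after all phases ON has admitted only $O(|A_{\OFF}|/\log\ell)$ requests, while an offline solution, knowing the phase structure, spreads the load and admits a constant fraction of every batch. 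Summing the $\Theta(\log\ell)$ phases gives the claimed ratio.

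I expect the main obstacle to be engineering the gadget so that (i) the capacity lower bound $\kappa(v)\geq\log\mu$ is respected yet still forces exactly the right amount of blocking — i.e. one careless online admission should cost $\Theta(\log\ell)$ future admissions relative to OFF, not merely $\Theta(1)$ — and (ii) the hop-length constraint $r$ can be met by the online choices the adversary wants to "trap" while remaining infeasible for the alternatives the adversary wants to forbid; getting both simultaneously in an \emph{undirected} graph, where routes are walks and loops are allowed, requires some care in the wiring and in the choice of $r$. A secondary technical point is the randomized case: I would need the hard distribution to be oblivious enough that Yao's principle applies cleanly, which typically means randomizing over which layer is the bottleneck in each phase and over the identity of the overloaded instance, so that no fixed deterministic algorithm can do better than the $\Omega(\log\ell)$ bound in expectation.
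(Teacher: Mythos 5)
You correctly identify the source of the bound --- the Awerbuch--Azar--Plotkin style lower bound for online call control, adapted with $\Theta(\log\ell)$ phases --- and this is indeed the route the paper takes. However, the concrete mechanism you describe does not produce the logarithmic factor, and the accounting step that actually yields it is missing. In the paper's construction the shared resource is a line $L=(v_1,\dots,v_\ell)$ of $\ell$ capacitated nodes with total capacity $\ell\kappa$; a request of phase $i$ intersects $L$ in a dyadic interval of $\ell/2^i$ nodes (the remaining function instances live on the $n-\ell$ other nodes), so the \emph{footprint} of a request on the bottleneck shrinks geometrically with the phase index. Your gadget instead concentrates each phase's conflict on ``a single instance'' at a designated bottleneck layer; with unit-size footprints in every phase there is no asymmetry between early and late admissions, and the adversary can force at best a constant-factor loss. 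Relatedly, you attribute the logarithm to the capacity bound $\kappa\ge\log\mu$ (``one careless admission blocks a logarithmic number of future requests''); in fact $\kappa$ plays no role in the loss factor --- the $\log\ell$ comes from the number of phases, i.e.\ the number of times an interval of $\ell$ nodes can be halved.

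The second missing piece is the argument that some phase is bad for the online algorithm. Writing $x_i$ for the number of requests ON admits in phase $i$, the capacity of $L$ gives $\sum_i(\ell/2^i)x_i\le\ell\kappa$; setting $S_j=(\ell/2^j)\sum_{i\le j}x_i$ and exchanging the order of summation yields $\sum_j S_j\le 2\ell\kappa$, so some phase $j$ has $S_j\le 2\ell\kappa/\log\ell$, i.e.\ ON has admitted at most $2\cdot 2^j\kappa/\log\ell$ requests by the end of phase $j$, while OFF can devote all of $L$ to the $2^j\kappa$ requests of phase $j$ alone. The adversary simply stops after phase $j$. Note that this makes the adversary adaptive only in the stopping time; the phase contents are fixed in advance, so for randomized algorithms the same averaging applied to expectations suffices, without the full Yao's-principle machinery you anticipate. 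Your dichotomy ``either ON refuses early requests or commits and gets saturated,'' without this averaging over phases, does not by itself give more than a constant.
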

\begin{proof}
We can adapt the proof of Lemma~4.1~in~\cite{AwerbuchAP93}
for our model.
We consider a capacity of $\kappa\geq\log{\ell}$, and we
divide the requests in $\sigma$ into $\log{\ell}+1$ phases.
We assume that $n\geq 2\ell^2$, and only focus
on a subset $L$ of $\ell=|L|$ nodes which are connected as a chain
$(v_1,\ldots,v_{\ell})$ and at which the different
service chains will overlap.
In phase 0, a group of $\kappa$ %$\log{\ell}$
service chains are requested, all of which
need to be embedded across the nodes $L=\{v_1,\ldots,v_{\ell}\}$.
%This can be enforced using the route length constraints.
In phases $i\geq 1$, $2^i$ groups of $\kappa$ %$\log{\ell}$
identical requests will need to share subsets of $L$ of size $\ell/2^i$,
that is, the $j$th group, $0\leq j< 2^i$, consists of $\kappa$ %$\log{\ell}$
requests to be embedded across nodes $[v_{j\ell/2^i+1},v_{(j+1)\ell/2^i}]$.
See Figure~\ref{fig:lb} for an illustration.

\begin{figure}[ht]
\centering
\includegraphics[width=0.4\columnwidth]{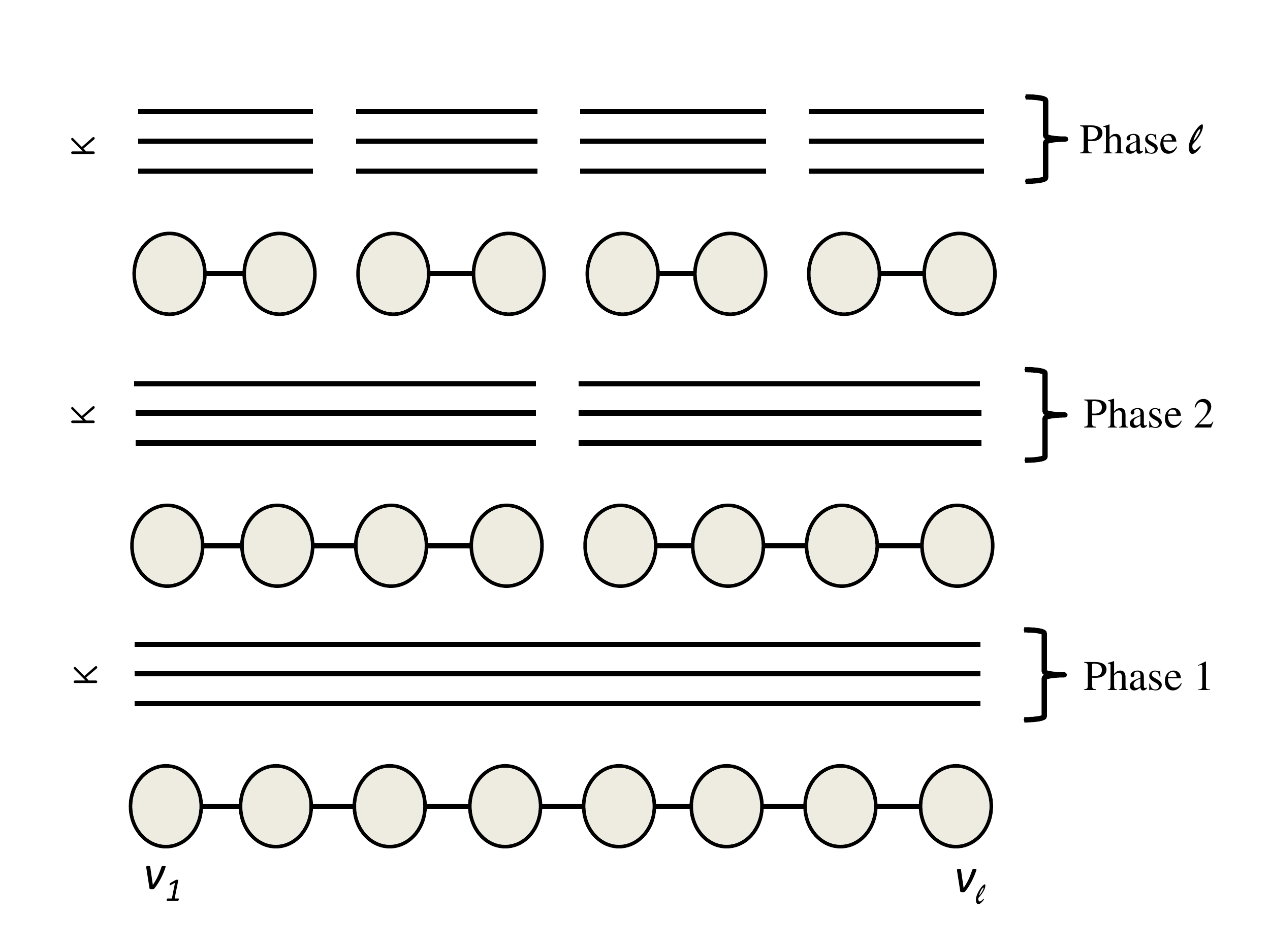}
\caption{Illustration of lower bound construction: The adversary issues
service chain requests in $1+\log \ell$ phases, where each phase $i$ consists of
$2^i$ %$\Theta(\log \ell)$
groups of $\kappa\geq \log\ell$ requests. In phase 0 the adversary issues requests
that can be assigned to $L=(v_1,...,v_\ell)$.
As intersections of chains in phase $i$ with $L$ are becoming
shorter over time, the online algorithm needs to decide whether to admit service
service chain requests in phases, where each phase consists of groups with
$\kappa$ chains.
 As chains are becoming shorter over time,
the online algorithm faces the problem whether to admit service chains
early (and hence block precious resources), or late (in which case the adversary
stops issuing new requests).
}
\label{fig:lb}
\end{figure}

Let $x_i$ denote the number of requests an online algorithm \textbf{ON}
admits in phase $i$. Each request accepted in phase $i$ will occupy
$\ell/2^i$ units of capacities of nodes in $L$. Overall,
the nodes in $L$ have a capacity of
$\ell\cdot\kappa$,
%$\ell\cdot \log{\ell}$,
so it must hold that
%$\sum_{i=0}^{\log{\ell}} \ell/2^i \cdot x_i \leq \ell\cdot \log{\ell}$.
$$
\sum_{i=0}^{\log{\ell}} \frac{\ell}{2^i} \cdot x_i \leq \ell\cdot \kappa.
$$

Now, for $0\leq j\leq \log\ell$, define
$S_j= \frac{\ell}{2^j} \cdot\sum_{i=0}^{j} x_i.$
$S_j$ is a lower bound on the occupied capacity on the nodes of $L$ after phase~$j$.
Then:
\begin{eqnarray*}
\sum_{j=0}^{\log{\ell}} S_j & & =  \sum_{j=0}^{\log\ell} \frac{\ell}{2^j} \sum_{i=0}^j x_i\\
& & = \sum_{i=0}^{\log\ell} x_i \sum_{j=i}^{\log\ell} \frac{\ell}{2^j}\\
& & \leq  \sum_{i=0}^{\log\ell} x_i  2\frac{\ell}{2^i}\\
& & = 2\ell\kappa.
\end{eqnarray*}

Hence there must exist a $j$ such that $S_j \leq 2\ell\kappa/\log\ell$.
Then after phase $j$, the number of requests admitted by the online algorithm \textbf{ON}
is
$$
\sum_{i=0}^j x_i = \frac{2^j}{\ell}S_j\leq \frac{2^j}{\ell} 2 \ell\kappa / \log\ell = 2\cdot 2^j\kappa /\log\ell.
$$
The optimal offline algorithm \textbf{OFF} can reject all requests
except for those of phase $j$.  The number of requests in phase $j$, and thus, the number of
requests admitted by \textbf{OFF} is $2^j\kappa$.
\qed
\end{proof}

In the following, we also show that
for networks with low capacities,
it is not even possible to \emph{approximate} the offline version of the
Service Chain Embedding Problem, SCEP in polynomial time.
These lower bounds on the approximation ratio
naturally also constitute lower bounds on the competitive
ratio which can be achieved for OSCEP by any online algorithm.

In particular, we first show that
already for short chains in scenarios with unit capacities,
SCEP cannot be approximated well.
\begin{theorem}\label{thm:apx}
In scenarios where service chains have length $\ell\geq 3$ and
where capacities are
$\kappa(v)=1$, for all $v$, the offline problem is APX-hard.
\end{theorem}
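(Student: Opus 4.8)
The plan is to give an approximation-preserving reduction from \emph{Maximum 3-Dimensional Matching} (3-DM), which is APX-hard (indeed already when each element lies in a bounded number of triples). Recall a 3-DM instance has three disjoint ground sets $X,Y,Z$ and a triple set $T\subseteq X\times Y\times Z$, and one seeks a maximum subset of pairwise disjoint triples. Note $\ell\geq 3$ is exactly what this approach needs: for $\ell=2$ the analogous problem is bipartite matching, which is in P.

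Given a 3-DM instance I would build an SCEP instance with $\ell=3$ and $\kappa(v)=1$ for all $v$ as follows. For each $x\in X$ create a node $u_x$ carrying an $F_1$-instance, for each $y\in Y$ a node $u_y$ with an $F_2$-instance, and for each $z\in Z$ a node $u_z$ with an $F_3$-instance; these are the only function-carrying nodes. For each triple $t=(x,y,z)\in T$ create a private \emph{gate} $g_t$, terminals $s_t,t_t$, unit-length edges $s_tg_t$ and $g_tt_t$, and edges $g_tu_x$, $g_tu_y$, $g_tu_z$ each of length $a$ for a suitably large polynomially bounded integer $a$ (non-unit lengths realized by subdivision, so $G$ stays unweighted of polynomial size). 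Set $\sigma=\{\sigma_t=(s_t,t_t):t\in T\}$ and fix the route-length bound $r$ to the \emph{exact} cost $2+6a$ of the canonical walk $s_t,g_t,u_x,g_t,u_y,g_t,u_z,g_t,t_t$, which applies $F_1$ at $u_x$, $F_2$ at $u_y$, $F_3$ at $u_z$ (each node is allowed to be revisited as a pure forwarding node, and only designated instances consume capacity).

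The heart of the proof — and the step that needs care — is the claim that for $t=(x,y,z)$ the \emph{only} chain $c$ with $\sigma_t\in S_c$ is $c=(u_x,u_y,u_z)$, with the canonical walk as its only route of length $\leq r$. Since $s_t,t_t$ are leaves at $g_t$, every $s_t$--$t_t$ walk starts $s_t,g_t$ and ends $g_t,t_t$, leaving a budget of exactly $6a$ for a closed $g_t$--$g_t$ walk that must reach, in the order $F_1,F_2,F_3$, one instance of each type and return. Every instance other than $u_x,u_y,u_z$ is reachable from $g_t$ only by first passing through some foreign gate, costing at least $2a$ extra; choosing $a$ larger than the maximum number of gates incident to any $u$-node (e.g.\ $a=\Theta(|T|)$) makes any deviation overshoot $r$. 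Hence the budget is tight and the route is forced. Consequently a set of requests is simultaneously admissible under unit capacities \emph{iff} the corresponding triples occupy pairwise disjoint instances, i.e.\ are pairwise disjoint in the 3-DM sense.

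This gives an exact correspondence: an SCEP solution admitting $k$ requests yields a 3-DM solution of size $k$ and conversely, so the optima coincide and the map is an L-reduction with both parameters equal to $1$; APX-hardness of 3-DM thus transfers to SCEP with $\ell=3$, $\kappa\equiv 1$. To obtain the statement for every fixed $\ell\geq 3$, I would add function types $F_4,\dots,F_\ell$, each with one private instance $w^i_t$ per request $\sigma_t$ (so they never cause contention), insert the segments $g_t,w^i_t,g_t$ into each gadget after $u_z$, and raise $r$ by $2a(\ell-3)$; the 3-DM correspondence is untouched. The only real obstacle is the distance bookkeeping in the ``unique feasible chain'' claim; the rest is a verbatim bijection between solutions.
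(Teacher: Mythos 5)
Your reduction is from a different source problem than the paper's (Maximum 3-Dimensional Matching rather than Maximum $k$-Set Packing), and unlike the paper you try to realize the reduction through an explicit graph metric. That is exactly where it breaks: the ``unique feasible chain'' claim is false. Your accounting implicitly assumes that any walk within the budget $6a$ must return to $g_t$ between function applications (a star-shaped walk), but the walk may instead chain through foreign gates. Concretely, for $t=(x,y,z)$ the walk $s_t,g_t,u_x,g_{h_1},u_{y''},g_{h_2},u_z,g_t,t_t$ has length exactly $2+6a=r$ and applies $F_1,F_2,F_3$ in order whenever there exist triples $h_1\supseteq\{x,y''\}$ and $h_2\supseteq\{y'',z\}$; so $\sigma_t$ can occupy the ``recombined'' chain $(u_x,u_{y''},u_z)$ with $y''\neq y$ even though $(x,y'',z)\notin T$. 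This kills the backward direction of your L-reduction: an SCEP solution yields pairwise disjoint triples of $X\times Y\times Z$, not of $T$. For a concrete counterexample take $T=\{(x_1,y_1,z_1),(x_2,y_1,z_2),(x_2,y_2,z_1),(x_1,y_2,z_2)\}$: every two triples intersect, so $\mathrm{OPT}_{3DM}=1$, yet $\sigma_{(x_1,y_1,z_1)}$ can be embedded on $(u_{x_1},u_{y_2},u_{z_1})$ via the gates of $(x_1,y_2,z_2)$ and $(x_2,y_2,z_1)$, disjointly from the canonical embedding of $\sigma_{(x_2,y_1,z_2)}$, giving $\mathrm{OPT}_{SCEP}\geq 2$. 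So the optima do not coincide and the map with both L-reduction parameters equal to $1$ does not exist; you would need either a modified gadget that provably forbids recombination, or an argument that recombined solutions can be rounded back to genuine matchings with only constant loss, and neither is supplied.

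For comparison, the paper sidesteps the metric entirely: it reduces from Maximum $k$-Set Packing ($k\geq 3$), builds one node per universe element and one chain per $k$-set, posits $|\sigma|\geq|\mathcal{C}|$ with every request assignable to every constructed chain, and lets unit capacities alone force admitted requests onto pairwise disjoint chains, so the correspondence with disjoint set families is immediate. Your instinct that the distance bookkeeping is ``the step that needs care'' was right---it is the step that fails---and the paper's proof is precisely an argument that avoids having to do that bookkeeping at all.
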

\begin{proof}
The proof follows from an approximation-preserving reduction
from \emph{Maximum $k$-Set Packing Problem (KSP)}. %~\cite{set-packing}.
The \emph{Maximum Set Packing (SP)} is one of Karp's 21 NP-complete problems,
where for a given collection $C$ of finite sets
a collection of disjoint sets $C'\subseteq C$ of maximum cardinality has to be found.
The KSP is the variation of the SP in which the cardinality of all sets in $C$ are bounded
from above by any constant $k\ge 3$, is APX-complete \cite{set-packing}.
We refer to such sets as $k$-sets.
%In the KSP, we are given a universe $U$ and a set $S$ of subsets of $U$.
%A \emph{packing} is a subset $C\subseteq S$ of subsets such that all sets in
%$C$ are pairwise disjoint, and we want to maximize the size of the packing $|C|$.
%The problem is APX-hard if all sets are of size at least 3.

KSP can be reduced to our problem as follows. Let $U$ be the universe and
$C$ be a collection of $k$-sets of $U$ in the KSP.
W.l.o.g., we assume that each $k$-set contains exactly $k$ elements, otherwise we can add
disjoint auxiliary elements to the sets in order to obtain exactly $k$ elements in each set
in~$C$.
For each $u\in U$ in the KSP instance we construct a node $v_u$ in the SCEP instance.
Furthermore, for each $k$-set $S$ in $C$, we construct a service chain $c_S$, such that
$c_S$ contains exactly the nodes $\{v_u\ :\ u\in S\}$. Let $\mathcal{C}$ be the set of
obtained service chains.
For the set of requests $\sigma$ we require that $|\sigma|\geq |\mathcal{C}|$ and that
each request can be assigned to each service chain.
Due to the unit capacity assumption, the set of admitted request must be assigned to
mutually disjoint service chains. Thus, the maximum number of admitted requests
is at most the maximum number of disjoint service chains.
Since each request can be assigned to each service chain and $|\sigma|\geq |\mathcal{C}|$,
an optimal solution for the SCEP determines a maximum set of mutually disjoint service chains.
This maximum set of disjoint service chains  determines a maximum number of disjoint $k$-sets,
and thus, an optimal solution for the KSP.
%the subsets in the packing as well as the admitted service chains in the embedding, must be disjoint.
%The goal is to choose a maximum number of sets, which corresponds to admitting a maximum number
%of chains.
\qed\end{proof}

It turns out that in general, with unit capacities,
SCEP cannot even be approximated within polylogarithmic
factors.
\begin{theorem}\label{thm:polyapx}
In general scenarios where capacities are
$\kappa(v)=1$, for all nodes $v$, and the chain length $\ell\geq 3$,
the SCEP is.
APX-hard and not approximable within  $\ell^\varepsilon $ for some
$\varepsilon >0$.
Without a bound on the chain length the SCEP with $\kappa(v)=1$, for all nodes $v$,
is Poly-APX-hard.
\end{theorem}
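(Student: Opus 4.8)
The plan is to reuse the approximation‑preserving reduction from the proof of Theorem~\ref{thm:apx} and feed it stronger hardness assumptions about the underlying set‑packing problems. Recall that that reduction turns an instance of Maximum $k$‑Set Packing into an SCEP instance with unit capacities, chain length $\ell=k$, a route‑length bound $r$ that every candidate chain satisfies, and $|\sigma|\ge|\mathcal{C}|$ requests each of which may be assigned to any chain; moreover the construction is \emph{optimum‑preserving}, i.e.\ collections of pairwise disjoint $k$‑sets correspond (at the level of cardinalities) to collections of mutually node‑disjoint chains, so $\mathrm{OPT}_{\mathrm{SCEP}}=\mathrm{OPT}_{\mathrm{KSP}}$ and a $\rho$‑approximation for one yields a $\rho$‑approximation for the other. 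Hence any inapproximability bound for the source problem transfers verbatim.

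For the first assertion I would invoke the known fact that, viewing $k$ as a growing parameter, Maximum $k$‑Set Packing is NP‑hard to approximate within $k^{\varepsilon_0}$ for some fixed $\varepsilon_0>0$ (this follows from known $\Omega(k/\log k)$‑inapproximability results for $k$‑Set Packing; for the small values of $\ell$ it is just the APX‑hardness already cited via \cite{set-packing}, so $\varepsilon_0$ may have to be taken small to accommodate $\ell=3$). Composing with the reduction above gives, for every $\ell\ge 3$, a family of unit‑capacity SCEP instances of chain length $\ell$ that is NP‑hard to approximate within $\ell^{\varepsilon}$ for a fixed $\varepsilon>0$, with APX‑hardness being the case $\ell=3$ restated from Theorem~\ref{thm:apx}. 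One detail I would spell out here is the node/function‑type bookkeeping: a single node $v_u$ (for a universe element $u$) must be allowed to host an instance of every function type $F_j$ at whose position $u$ occurs in some padded $k$‑set, and then its unit capacity forces it onto at most one selected chain, so that ``chains conflict'' $\iff$ ``they share a node'' $\iff$ ``the underlying sets intersect'', exactly as the counting argument needs.

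For the Poly‑APX‑hardness claim I would run the same reduction from \emph{Maximum Set Packing with no bound on set size}. That problem is itself Poly‑APX‑hard via the standard optimum‑preserving correspondence with Maximum Independent Set: given $H=(V_H,E_H)$, take the universe $E_H$ and, for each $v\in V_H$, the set $S_v$ of its incident edges, so that disjointness of the $S_v$ is exactly independence of the corresponding vertices; the NP‑hardness of approximating Independent Set within $|V_H|^{1-\varepsilon}$ then makes Maximum Set Packing Poly‑APX‑hard. Padding every $S_v$ to the common length $\ell=\Delta(H)$ with fresh auxiliary elements (as in Theorem~\ref{thm:apx}) produces an SCEP instance whose chain length equals the maximum degree of $H$ — which is unbounded — and whose optimum is preserved throughout, so SCEP with unit capacities and unbounded $\ell$ inherits Poly‑APX‑hardness.

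The real work is not any single clever step but checking that the composition stays approximation‑preserving inside the SCEP formalism: I must verify (i) that the route‑length constraint can always be made vacuous, which is arranged by taking the host graph to be a clique with unit edge lengths and $r=\ell+1$ (or simply a large $r$ on any connected host, endpoints acting as forwarding nodes); (ii) that a chain still consists of exactly one instance of each type $F_1,\dots,F_\ell$ after translating elements to nodes and padding — the bookkeeping point above; and (iii) that ``unbounded $\ell$'' is compatible with SCEP's single global chain length, which is precisely why all sets are padded to a common size. Once these are in place, the two hardness statements fall out directly from the inapproximability of $k$‑Set Packing and of Independent Set.
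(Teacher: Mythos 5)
Your proposal is correct, and for the Poly-APX part it is in substance identical to the paper's proof: the paper reduces Maximum Independent Set directly to SCEP by letting each vertex $v$ of the MIS instance define a chain $c_v$ on the nodes corresponding to the edges incident to $v$, padded with fresh auxiliary nodes to the common length $\ell$ --- which is exactly your two-step composition ``MIS $\to$ Set Packing over the universe of edges $\to$ SCEP''; factoring through Set Packing merely makes explicit that you are reusing the Theorem~\ref{thm:apx} machinery. The one genuine divergence is in the $\ell^\varepsilon$ claim: the paper keeps MIS as the source problem and invokes the inapproximability of Independent Set on degree-$\ell$-bounded graphs within $\ell^\varepsilon$ (the result it cites as \cite{Alon95}), whereas you invoke the $\Omega(k/\log k)$ inapproximability of $k$-Set Packing. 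Both are legitimate known results (the latter is itself established via bounded-degree Independent Set), so either composition yields the theorem; the paper's choice has the minor advantage of using a single source problem and a single reduction for all three assertions, while yours has the advantage of requiring no construction beyond the one already verified for Theorem~\ref{thm:apx}. Your bookkeeping points --- that a node $v_u$ may have to host instances of several function types depending on the positions at which $u$ occurs in the padded sets, with the unit node capacity (not the type multiplicities) being what forces selected chains to be node-disjoint, and that the route-length bound must be made vacuous --- address details that the paper's proof leaves implicit, so nothing is missing from your argument.
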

\begin{proof}
We reduce the \emph{Maximum Independent Set (MIS)} problem with maximum degree $\ell$
to the SCEP with capacity $\kappa(v)=1$, for all $v\in V$ and chain length $\ell$.
For graphs with bounded degree $\ell\geq 3$, the MIS is APX-complete~\cite{Papadimitriou91}
and cannot be approximated within  $\ell^\varepsilon$ for some
$\varepsilon >0$~\cite{Alon95}.
By our reduction we obtain the APX-hardness and non-approximability within
 $\ell^\varepsilon $ for some $\varepsilon >0$  for the SCEP.
In general, for graphs without degree bound, the MIS is Poly-APX-complete~\cite{Bazgan05},
i.e. it is as hard as any problem that can be approximated to a polynomial factor.
By our reduction we obtain that the SCEP without chain length bound
is Poly-APX-hard.

For an instance $G=(V,E)$ of the MIS problem with maximum degree $\ell$, we construct
an instance of the SCEP with capacity $\kappa=1$ and chain length $\ell$ as follows.
For each  node $v\in G$, let $c_v$ be the chain whose nodes correspond to the edges
in $G$ incident to $v$. If $\deg_G(v)<\ell$ then we complete the chain with
$\ell-\deg_G(v)$ unique  auxiliary nodes, in order to have $\ell$ nodes in the chain.
The chain set is $C=\{c_v\ : \ v\in G\}$.
For the set of requests $\sigma$, we require that $|\sigma|\geq |C|$ and
each request $\sigma_i\in \sigma$ can be assigned to each $c\in C$.
Assigning a $\sigma_i$ to a chain $c\in C$ fills the capacity of all nodes in $c$
and the capacity of all chains $c'\in C$ that contain a common node with~$c$.
Therefore, no further request $\sigma_j$, $j\neq i$, can be assigned to those chains.
The chains having a common node with $c_v$ correspond exactly the neighbors of $v$ in $G$.
Therefore, nodes $u$ and $v$ are independent in the MIS instance iff chains $c_u$ and $c_v$
do not have a common node in the SCEP instance.
Since each request $\sigma_i$ can be assigned to each $c\in C$ and $|\sigma|\geq |C|$,
a maximum number of admitted requests is determined by a maximum chain set $C'$,
such that for all $c_u,c_v\in C'$,
$c_u$ and $c_v$ do not contain a common node.
Therefore, $C'$ determines a maximum independent set in~$G$.
Consequently, an $\alpha$-approximation for the SCEP would
imply an $\alpha$-approximation for the MIS problem.
%TODO TAMAS: COPY-PASTE FROM SOMEWHERE?
%From \emph{Maximum Independent Set (MIS)} it follows that for capacity 1 already,
%the problem is even poly-APX complete~\cite{polyapx}.
%Proof: make neighborhood of node in original graph a chain.
%Two nodes independent in original graph if and only if
%two chains do not share a node (disjoint).
\qed\end{proof}

%Note that for $\ell \in \Omega(n^{\beta})$, for some $\beta>0$,
%Theorem~\ref{thm:polyapx} implies Poly-APX-hardness.

%\section{Related Work}\label{sec:relwork}

\section{Summary and Conclusion}\label{sec:summary}

Over the last decades, a large number of middleboxes have
been deployed in computer networks, to increase
security and application performance,
as well as to offer new services in the form of
static and dynamic in-network processing (see the services by Akamai, Google Global Cache, Netflix Open Connect).

However,
the increasing cost and inflexibility of hardware middleboxes
(slow deployment, complex upgrades, lack of scalability),
motivated the advent of Network Function Virtualization (NFV)~\cite{routebricks,opennf,modeling-middleboxes,clickos},
which aims to run the
functionality provided by middleboxes as software on commodity
hardware.
The transition to NFV is discussed
within standardization groups such as ETSI, and we currently also
witness first
deployments, e.g., TeraStream~\cite{terastream}.

The possibility to chain individual network functions to form more complex
services has recently attracted much interest, both in academia~\cite{karl-chains,merlin},
as well as in industry~\cite{ewsdn14}.

Our paper made a first step towards a better understanding
of the algorithmic problem underlying the embedding
of service chains.
Our main contribution is a deterministic online algorithm $\ALG$
which achieves a competitive ratio of
$O(\log{\ell})$ for OSCEP, given that node capacities
are at least $\Omega(\log{\ell})$. This is interesting,
as the number $\ell$ of to-be-chained network functions is
likely to be a small constant in practice.
We also show that $\ALG$ is asymptotically optimal, in the sense that no
%deterministic
deterministic or randomized \emph{online} algorithm can achieve a competitive ratio
$o(\log{\ell})$. Moreover, we initiate the study of lower bounds
for the offline version of our problem, and show that no good
approximation algorithms exist, unless
$P=NP$:
the offline problem is APX-hard for unit capacities and service chains of length three.
In general,
the problem is even Poly-APX-hard under unit capacities. These results
also apply to the offline version of classic Virtual Circuit Routing.
Finally, this paper presented an exact algorithm based on 0-1 linear programming
for solving the offline SCEP optimally,
which implies that the offline SCEP is in NP,
if the size of the 0-1 program is polynomial, which holds for constant $\ell$
 --  0-1 programming is one
of  Karp's 21 NP-complete problems~\cite{Karp72}.

We believe our paper opens several interesting directions
for future research. For instance, it would be interesting
to find a lower bound for the approximation ratio
for the offline problem version where $\ell=2$.

%Our work opens interesting directions for future research.
%For example, it will be interesting to ,

{\bibliographystyle{abbrv}
\small
\bibliography{typeinst}

\begin{thebibliography}{10}

\bibitem{opennf}
{A.~Gember-Jacobson et al.}
\newblock {OpenNF}: Enabling innovation in network function control.
\newblock In {\em Proc. ACM SIGCOMM}, 2014.

\bibitem{Alon95}
N.~Alon, U.~Feige, A.~Wigderson, and D.~Zuckerman.
\newblock Derandomized graph products.
\newblock {\em Computational Complexity}, 5:60--75, 1995.

\bibitem{AwerbuchAP93}
B.~Awerbuch, Y.~Azar, and S.~A. Plotkin.
\newblock Throughput-competitive on-line routing.
\newblock In {\em Proc. 34th Annual Symposium on Foundations of Computer
  Science (FOCS)}, pages 32--40, 1993.

\bibitem{AwerbuchAPW94}
B.~Awerbuch, Y.~Azar, S.~A. Plotkin, and O.~Waarts.
\newblock Competitive routing of virtual circuits with unknown duration.
\newblock In {\em Proc. 5th Annual {ACM-SIAM} Symposium on Discrete Algorithms
  (SODA)}, pages 321--327, 1994.

\bibitem{DBLP:journals/corr/BariCAB15}
F.~Bari, S.~R. Chowdhury, R.~Ahmed, and R.~Boutaba.
\newblock On orchestrating virtual network functions in {NFV}.
\newblock {\em CoRR}, 2015.

\bibitem{Bazgan05}
C.~Bazgan, B.~Escoffier, and V.~T. Paschos.
\newblock Completeness in standard and differential approximation classes:
  Poly-(d)apx- and (d)ptas-completeness.
\newblock {\em Theoretical Computer Science}, 339(2-3):272--292, 2005.

\bibitem{routebricks}
M.~Dobrescu, N.~Egi, K.~Argyraki, B.-G. Chun, K.~Fall, G.~Iannaccone, A.~Knies,
  M.~Manesh, and S.~Ratnasamy.
\newblock Routebricks: Exploiting parallelism to scale software routers.
\newblock In {\em Proc. ACM SOSP}, pages 15--28, 2009.

\bibitem{ETSI}
{ETSI}.
\newblock Network functions virtualisation (nfv); use cases.
\newblock {\em
  \url{http://www.etsi.org/deliver/etsi_gs/NFV/001_099/001/01.01.01_60/gs_NFV001v010101p.pdf}},
  2014.

\bibitem{DBLP:conf/sirocco/EvenM13}
G.~Even and M.~Medina.
\newblock A nonmonotone analysis with the primal-dual approach: Online routing
  of virtual circuits with unknown durations.
\newblock In {\em Proc. 20th International Colloquium on Structural Information
  and Communication Complexity (SIROCCO)}, pages 104--115, 2013.

\bibitem{tcs12vnet}
G.~Even, M.~Medina, G.~Schaffrath, and S.~Schmid.
\newblock Competitive and deterministic embeddings of virtual networks.
\newblock {\em Elsevier Theoretical Computer Science (TCS)}, 2013.

\bibitem{sdx}
A.~Gupta, L.~Vanbever, M.~Shahbaz, S.~P. Donovan, B.~Schlinker, N.~Feamster,
  J.~Rexford, S.~Shenker, R.~Clark, and E.~Katz-Bassett.
\newblock Sdx: A software defined internet exchange.
\newblock In {\em Proc. ACM SIGCOMM}, pages 551--562, 2014.

\bibitem{set-packing}
E.~Hazan, S.~Safra, and O.~Schwartz.
\newblock On the complexity of approximating k-set packing.
\newblock {\em Comput. Complex.}, 15(1):20--39, May 2006.

\bibitem{modeling-middleboxes}
D.~Joseph and I.~Stoica.
\newblock Modeling middleboxes.
\newblock {\em Netwrk. Mag. of Global Internetwkg.}, 22(5):20--25, Sept. 2008.

\bibitem{Karp72}
R.~M. Karp.
\newblock Reducibility among combinatorial problems.
\newblock In {\em Complexity of Computer Computations}. 1972.

\bibitem{clickos}
J.~Martins, M.~Ahmed, C.~Raiciu, and F.~Huici.
\newblock Enabling fast, dynamic network processing with clickos.
\newblock In {\em Proc. HotSDN}, pages 67--72, 2013.

\bibitem{openflow}
N.~McKeown, T.~Anderson, H.~Balakrishnan, G.~Parulkar, L.~Peterson, J.~Rexford,
  S.~Shenker, and J.~Turner.
\newblock Openflow: Enabling innovation in campus networks.
\newblock {\em SIGCOMM Comput. Commun. Rev.}, 38(2):69--74, Mar. 2008.

\bibitem{karl-chains}
S.~Mehraghdam, M.~Keller, and H.~Karl.
\newblock Specifying and placing chains of virtual network functions.
\newblock In {\em Proc. 3rd {IEEE} International Conference on Cloud Networking
  (CloudNet)}, pages 7--13, 2014.

\bibitem{ewsdn14}
{P.~Skoldstrom et al.}
\newblock Towards unified programmability of cloud and carrier infrastructure.
\newblock In {\em Proc. European Workshop on Software Defined Networking
  (EWSDN)}, 2014.

\bibitem{Papadimitriou91}
C.~H. Papadimitriou and M.~Yannakakis.
\newblock Optimization, approximation, and complexity classes.
\newblock {\em J. Comput. System Sci.}, 43:425--440, 1991.

\bibitem{Plotkin95}
S.~A. Plotkin.
\newblock Competitive routing of virtual circuits in {ATM} networks.
\newblock {\em {IEEE} Journal on Selected Areas in Communications},
  13(6):1128--1136, 1995.

\bibitem{flowtags}
{S.~Fayazbakhsh et al.}
\newblock Flowtags: Enforcing network-wide policies in the presence of dynamic
  middlebox actions.
\newblock In {\em Proc. ACM HotSDN}, 2013.

\bibitem{manifesto}
V.~Sekar, S.~Ratnasamy, M.~K. Reiter, N.~Egi, and G.~Shi.
\newblock The middlebox manifesto: Enabling innovation in middlebox deployment.
\newblock In {\em Proc. HotNets}, pages 21:1--21:6, 2011.

\bibitem{merlin}
R.~Soul{\'e}, S.~Basu, P.~J. Marandi, F.~Pedone, R.~Kleinberg, E.~G. Sirer, and
  N.~Foster.
\newblock Merlin: A language for provisioning network resources.
\newblock In {\em Proc. 10th ACM International on Conference on Emerging
  Networking Experiments and Technologies (CoNEXT)}, pages 213--226, 2014.

\bibitem{eurosys15}
R.~Stoenescu, M.~Popovici, V.~Olteanu, J.~Martins, R.~Bifulco, F.~Huici,
  M.~Ahmed, G.~Smaragdakis, M.~Handley, and C.~Raiciu.
\newblock In-net: Enabling in-network processing for the masses.
\newblock In {\em Proc. ACM EuroSys}, 2015.

\bibitem{terastream}
D.~Telekom.
\newblock Terastream.
\newblock In {\em
  \url{http://www.a10networks.com/resources/files/A10-CS-80103-EN.pdf#search=%22management%22}},
  2013.

\end{thebibliography}
}

\begin{appendix}

\end{appendix}

\end{document}